\newtheorem{theorem}{Theorem}[section]
\newtheorem{lemma}[theorem]{Lemma}
\newcommand{\FullOrShort}{full}
	  \newcommand{\fullOnly}[1]{#1}
	  \newcommand{\shortOnly}[1]{}
	  \newcommand{\fullOnly}[1]{}
	  \newcommand{\shortOnly}[1]{#1}
\begin{document}

\date{}

\author{Mohsen Ghaffari\\ \texttt{ghaffari@mit.edu}\\ MIT \and Bernhard Haeupler\\ \texttt{haeupler@mit.edu}\\ MIT \and Majid Khabbazian\\ \texttt{m.khabbazian@uwinnipeg.ca}\\ University of Winnipeg}

\title{A Bound on the Throughput of Radio Networks
}

\maketitle
\begin{abstract}
We consider the well-studied \emph{radio network model}: a synchronous model with a graph $G=(V,E)$ with $|V|=n$ where in each round, each node either transmits a packet, with length $B=\Omega(\log n)$ bits, or listens. Each node receives a packet iff it is listening and exactly one of its neighbors is transmitting. We consider the problem of $k$-message broadcast, where $k$ messages, each with $\Theta(B)$ bits, are placed in an arbitrary nodes of the graph and the goal is to deliver all messages to all the nodes. We present a simple proof showing that there exist a radio network with radius $2$ where for any $k$, broadcasting $k$ messages requires at least $\Omega(k\log n)$ rounds. That is, in this network, regardless of the algorithm, the maximum achievable broadcast throughput is $O(\frac{1}{\log n})$.
\end{abstract}

\section{The Model and the Problem Statement}
We consider the well-studied \emph{radio network model}, first intoruced by Chlamtac and Kutten~\cite{CK}. This connections in this model are presented by a graph $G=(V,E)$ with $|V|=n$. Moreover, the model is synchronous, i.e., the executions proceed in lock-step rounds. In each round, each node either transmits a packet, with length $B=\Omega(\log n)$ bits, or listens. Each node receives a packet iff it is listening and exactly one of its neighbors is transmitting a packet.

We consider the $k$-message broadcast problem, where $k$ messages, each consisting of $\Theta(B)$ bits, are placed in an arbitrary subset of the nodes of the graph and the goal is to deliver all of these messages to all the nodes. We do not require that the packets that nodes transmit (over the channel) are one of these messages. In other words, we do not restrict the algorithm to be a routing algorithm, and in particular, it can be a network coding algorithm~\cite{KM03}, where each of the transmitted packets is a combination of some of the messages.  

\section{The Bound}
We show the following theorem.

\begin{theorem}\label{thm:main} There exist a radio network with radius $2$ where for any $k$, broadcasting $k$ messages requires at least $\Omega(k\log n)$ rounds.
\end{theorem}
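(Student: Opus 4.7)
The plan is to exhibit an explicit radius-$2$ graph $G$ on $n$ vertices and derive the lower bound via a short counting argument combining an information-theoretic requirement per receiver with a combinatorial bound on per-round receptions. The graph $G$ consists of a source $s$, an intermediate layer $V$ of $\Theta(\log n)$ vertices each adjacent to $s$, and a receiver layer $U$ of size $\Theta(n)$, where each $u \in U$ has a carefully chosen neighborhood $N(u) \subseteq V$; the key property ensured by the construction is
\[
(\star) \qquad \forall T \subseteq V:\quad |\{u \in U : |N(u) \cap T| = 1\}| \le O(n/\log n).
\]

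Assuming such a graph, the lower bound proof itself is short. First, each receiver $u \in U$ must receive at least $k$ packets during the execution: the $k$ messages carry $\Theta(kB)$ bits of adversarially-chosen independent information and each received packet carries only $B$ bits, so by a standard information-theoretic argument fewer than $k$ successful receptions do not suffice. Summing over all of $U$, the total number of useful receptions required is $\Omega(nk)$. Second, by the radio model's reception rule---exactly one neighbor of $u$ must transmit---together with $(\star)$, any single round delivers a useful packet to at most $O(n/\log n)$ receivers, regardless of the algorithm's choice of transmitter set $T_t \subseteq V$. Combining these two bounds yields $T \cdot O(n/\log n) \ge \Omega(nk)$, i.e., $T \ge \Omega(k \log n)$. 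The $O(k)$ additive cost of $s$ first disseminating its $k$ messages to all of $V$ (achievable in $k$ rounds while $V$ listens) is absorbed into this bound.

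The main obstacle is constructing a graph satisfying $(\star)$. A uniform random bipartite graph does not suffice: for transmitter sets $T$ of size roughly $1/p$ (where $p$ is the edge density), a constant fraction of receivers end up with $|N(u) \cap T| = 1$, giving $\Theta(n)$ rather than $O(n/\log n)$ receptions per round. Achieving $(\star)$ therefore requires a more delicate combinatorial design---plausibly an explicit construction akin to that of Alon, Bar-Noy, Linial, and Peleg in their $\Omega(\log^2 n)$ single-message lower bound, or a suitably tuned probabilistic construction---together with a union bound over the $2^{|V|}$ potential transmitter sets $T$. The technical heart of the proof is in ensuring $(\star)$ uniformly across all $T$ of all sizes: both small $T$, where the worry is hitting many low-degree receivers exactly once, and larger $T$, where the algorithm could otherwise exploit nearly collision-free configurations.
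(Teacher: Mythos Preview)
Your high-level outline---a source feeding an intermediate layer, then a bipartite bottleneck where at most an $O(1/\log n)$ fraction of receivers can receive per round, combined with the information-theoretic requirement that each receiver collect $\Omega(k)$ packets---is exactly the paper's route. The paper likewise reduces everything to a combinatorial lemma equivalent to your $(\star)$ and finishes with the same counting argument.

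However, your choice $|V|=\Theta(\log n)$ makes $(\star)$ \emph{unsatisfiable}. Taking singletons $T=\{v\}$, property $(\star)$ forces every $v\in V$ to have at most $O(n/\log n)$ neighbors in $U$, so the bipartite graph has only $|V|\cdot O(n/\log n)=O(n)$ edges; since $|U|=\Theta(n)$ and every receiver must have degree at least $1$ (else it can never receive anything), a constant fraction of $U$ has degree at most some absolute constant $D$. Now include each $v\in V$ in $T$ independently with probability $1/2$: a receiver of degree $d\le D$ receives with probability $d\,2^{-d}\ge D\,2^{-D}=\Omega(1)$, so in expectation $\Omega(n)$ receivers succeed, and some fixed $T$ attains this---flatly violating the $O(n/\log n)$ bound of $(\star)$. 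A logarithmic-size middle layer is simply too small; one needs it to be polynomially large.

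Concretely, the paper takes $|S|=n'=\sqrt{n}$ senders and $\tfrac12 n'\log n$ receivers, partitioned into $\tfrac12\log n$ degree classes of size $n'$ each, where class $i$ consists of receivers of degree exactly $2^i$ wired to uniformly random senders. For any fixed $T$, essentially only the single class with degree $\approx n'/|T|$ contributes, and a short computation bounds the expected number of successful receivers by $O(n')$; a Chernoff bound together with a union bound over all $2^{n'}$ possible $T$ then yields the analogue of $(\star)$. This multi-scale degree structure is precisely the ``more delicate combinatorial design'' you allude to but do not supply, and the reason the union bound closes is that the Chernoff tail $e^{-\Omega(n')}$ beats $2^{n'}$---which in turn requires $n'$ to be polynomial in $n$, not logarithmic.
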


The main approach of our proof is similar to that of the proof of the $\Omega(\log^2 n)$ lower bound of Alon, Bar-Noy, Linial and Peleg~\cite{ABLP} on the time to broadcast a single message in radio networks with radius $2$. We remark that \Cref{thm:main} also follows from the $\Omega(n \log n)$ gossip lower bound proof of Gasienec and Potapov~\cite{GP02}, which itself is achieved by a reduction to the $\Omega(\log^2 n)$ lower bound of \cite{ABLP}. Our proof is direct and considerably shorter and simpler than the proof of~\cite{ABLP}.

In order to prove \Cref{thm:main}, we first work with bipartite networks $G=(V, E)$ where $V=S\cup R$, $S \cap R =\emptyset$, and there is no edge between two nodes of $S$, or between two nodes of $R$. We call the nodes in $S$ \emph{senders} and the nodes in $R$ the \emph{receivers}. We show the following theorem about bipartite radio networks.
 
\begin{lemma}\label{lem:core}
There exists a bipartite network $\mathsf{H}$ with less than $n$ nodes such that in each round, regardless of which nodes transmit, at most $O(\frac{1}{\log n})$ of receivers receive a packet. 
\end{lemma}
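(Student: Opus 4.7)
The plan is to adapt the random construction used in the $\Omega(\log^2 n)$ single-message lower bound of Alon, Bar-Noy, Linial and Peleg~\cite{ABLP}. Take $|S| = s$ with $s = \Theta(n/\log n)$, let $\ell = \lfloor \log s \rfloor = \Theta(\log n)$, and partition the receiver side into classes $R_1, \dots, R_\ell$ of a common size $m = \Theta(s)$ so that $|S|+|R| < n$. Each receiver $r \in R_i$ is assigned a neighborhood $N(r) \subseteq S$ of size $d_i = 2^i$, drawn independently and uniformly at random from all $d_i$-subsets of $S$.

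Now fix a transmitting set $T \subseteq S$ with $|T| = t \geq 1$. A receiver $r \in R_i$ hears a packet iff $|N(r) \cap T| = 1$, which occurs with probability
\[
p_i(t) \;\leq\; d_i \cdot \frac{t}{s} \cdot \Bigl(1 - \frac{t}{s}\Bigr)^{d_i-1}.
\]
The core analytic step is to show $\sum_{i=1}^{\ell} p_i(t) = O(1)$ uniformly in $t$. Writing $q_i = d_i t/s$, the terms behave like $q_i e^{-q_i}$: for $q_i \ll 1$ the term is $\approx q_i$, while for $q_i \gg 1$ it decays super-exponentially in $q_i$. Since the $q_i$ form a geometric progression with ratio $2$, the total sum is controlled by the $O(1)$ terms within a constant-width window around the peak $q_i \approx 1$. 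Consequently the expected number of receivers hearing in response to $T$ is at most $\sum_i m \cdot p_i(t) = O(m) = O(|R|/\log n)$.

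To turn this expectation into a bound that holds simultaneously for every $T$, I would apply a Chernoff tail inequality: for a sufficiently large constant $C$, the probability that more than $C m$ receivers hear for a given $T$ is at most $e^{-\Omega(m)}$. A union bound over the $2^s$ transmitting sets then succeeds provided $m \geq c' s$ for an appropriate constant $c'$, which is how we chose $m$. Some fixed assignment of neighborhoods therefore yields a graph $\mathsf{H}$ in which every transmission pattern delivers to at most $O(|R|/\log n)$ receivers, as required.

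The main obstacle is the analytic lemma $\sum_i p_i(t) = O(1)$: it encodes the intuition that, regardless of $t$, only a constant-width window of degree classes has any non-trivial chance of intersecting $T$ in exactly one element, with the low-degree classes contributing a geometric sum bounded by the ``peak'' term and the high-degree classes killed by the factor $(1-t/s)^{d_i-1}$. Once this is in hand, the Chernoff-plus-union-bound bookkeeping and the parameter balancing $s \cdot \log n \leq n$ are routine.
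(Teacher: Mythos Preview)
Your proposal is correct and follows essentially the same approach as the paper: a random bipartite graph with $\Theta(\log n)$ geometric degree classes on the receiver side, the bound $\sum_i p_i(t)=O(1)$ obtained by splitting the sum around the index where $d_i t/s\approx 1$, followed by Chernoff and a union bound over all $2^{|S|}$ transmitting sets. The only cosmetic difference is your parameter choice $s=\Theta(n/\log n)$ versus the paper's $s=\sqrt{n}$; both satisfy the constraints $|S|+|R|<n$ and $m=\Omega(s)$ needed for the union bound, so the argument goes through unchanged.
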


\begin{proof}
We consider a distribution over a family of bipartite graphs $\mathcal{G}$ where we have $|S|=n'=\sqrt{n}$ senders and $|R|=\frac{n' \log n}{2}$ receivers. The receivers are divided into $\frac{\log n}{2}$ classes, each of equal size $n'$. For each $i \in \{1, 2, \dots, \frac{\log n}{2}\}$, the receivers of class $i$ have degree exactly $2^i$ in each graph of family $\mathcal{G}$. To present the distribution, we explain how to draw a random graph from this distribution. In a random graph $G \in \mathcal{G}$, the connections are chosen randomly as follows: for each $i \in \{1, 2, \dots, \frac{\log n}{2}\}$, each receiver in the $i^{th}$ receiver class is connected to $2^i$ randomly chosen senders. The choices of different receivers are independent. Note that the size of each graph in this family is $n'(1+\frac{\log n}{2}) <n$. 

Consider a random graph $G\in \mathcal{G}$. We claim that, with probability at least $1-e^{-2n'}$, $G$ has the property that in each round at most a $O(\frac{1}{\log n})$ fraction of the receiver nodes receive a packet, regardless of which set nodes transmit. 

To prove this claim, we first study the receptions in $G$ when a fixed subset $S'$ of senders transmit. More precisely, we calculate the expected number of receivers in $G$ that receive a packet if exactly senders in $S'$ transmit. 
Consider a receiver $v$ with degree $\Delta$. Receiver $v$ receives a packet iff exactly one of its sender neighbors is in set $S'$. If $\Delta>n'-|S'|+1$, then clearly $v$ does not receiver a packet. Suppose that $\Delta\leq n'-|S'|+1$. Then, the probability that $G$ is such that $v$ receives a packet is exactly
\begin{equation}
\label{equ:pDelta}
\begin{split}
  P_{\Delta}(|S'|)=\frac{\binom{|S'|}{1} \binom{n'-|S'|}{\Delta - 1}}{\binom{n'}{\Delta}}
  &= \frac{|S'|\Delta}{n'}\prod_{i=1}^{\Delta-1} (1 - \frac{|S'|-1}{n'-i}) \leq \frac{|S'|\Delta}{n'}(1 - \frac{|S'|-1}{n'-1})^{\Delta-1}\\
  &\leq \frac{|S'|\Delta}{n'}\exp\left(-\frac{|S'|-1}{n'-1}(\Delta-1)\right)\leq \frac{|S'|\Delta}{n'}\exp\left(-\frac{|S'|-1}{n'}(\Delta-1)\right)\\
  &= \frac{|S'|\Delta}{n'}\exp\left(-\frac{|S'|}{n'}\Delta\right)\exp\left(\frac{|S'|+\Delta-1}{n'}\right)\\
  &\leq e\cdot\frac{|S'|\Delta}{n'}\exp\left(-\frac{|S'|}{n'}\Delta\right).
\end{split}
\end{equation}
For each $i \in \{1, 2, \dots, \frac{n'\log n}{2}\}$ receiver, let $X_i$ be an indicator random variable which is each equal to $1$ iff the $i^{th}$ receiver receives a packet. Also define the random variable $X=\sum_{i=1}^{\frac{n'\log n}{2}} X_i$. Let $\Delta^*=2^{\lfloor\log(\frac{n'}{|S'|})\rfloor}\leq \frac{n'}{|S'|}$.
Using (\ref{equ:pDelta}), we have
\begin{equation}
\label{equ:max5}
\begin{split}
  \mathbb{E}[X] = n' \sum_{i=1}^{\log n'}P_{2^i}(|S'|)
  &\leq en'\cdot\sum_{i=1}^{\log n'}\frac{|S'|2^i}{n'}
     \exp\left(-\frac{|S'|}{n'}2^i\right)\\
  &= en'\cdot\left(\sum_{i=1}^{\log\Delta^*}\frac{|S'|2^i}{n'}
     \exp\left(-\frac{|S'|}{n'}2^i\right)+\sum_{i=\log\Delta^*+1}^{\log n'}\frac{|S'|2^i}{n'}
     \exp\left(-\frac{|S'|}{n'}2^i\right)\right)\\
  &\leq en'\cdot\left(\sum_{j=0}^{\infty}\frac{1}{2^j}
    +\sum_{j=0}^{\infty}\frac{2^{j+1}}{e^{2^{j}}}\right)<10n'.
\end{split}
\end{equation}
Note that the random variables $X_i$ are independent as the neighbors of different receivers are chosen independently. Thus, we can use a chernoff bound and infer that $Pr(X>20 n')<e^{-3n'}$.
That is, when exactly nodes in set $S'$ are transmitting, with probability at least $1-e^{-3n'}$, random graph $G$ is such that at most $20n'$ receivers receive a packet. 

Now note that the total number of choices for set $S'$ is $2^{n'}$. Therefore, by a union bound over all choices of set $S'$, we get that with probability at least $1-e^{-3n'}\cdot2^{n'}>1-e^{-2n'}$, the random graph $G$ is such that no set $S'$ can deliver a packet to more than $20n'$ receivers. This in particular shows that there exists a bipartite graph $\mathsf{H}$ in this family such that no set $S'$ can deliver a packet to more than $20n'$ receivers. Since there are $\frac{n' \log n}{2}$ receivers, we get that in $\mathsf{H}$, there does not exist a subset of senders which their transmission delivers a packet to more than an $\frac{40}{\log n}$ fraction of the receiver nodes.
\end{proof}

\begin{proof}[Proof of \Cref{thm:main}] Consider network $\mathsf{H}$ proven to exist in \Cref{lem:core} and let $\eta$ be the size of $\mathsf{H}$, i.e., $\eta = n'(1+\frac{\log n}{2})$. We construct network $H'$ based on network $\mathsf{H}$ as follows: we add one source node $s$ and $n-\eta-1$ \emph{void} nodes and we connect node $s$ to all senders and all void nodes. Clearly $H'$ has radius $2$. Put $k$ messages in the source node $s$. For each receiver node $u$, in order for $u$ to have all the $k$ messages, it must receive at least $\Omega(k)$ packets. Note that this holds for any algorithm including network coding algorithms. Since receiver nodes are only connected to the sender nodes, from \Cref{lem:core}, we get that in each round at most $O(\frac{1}{\log n})$ of receivers receive a packet (any packet). Thus, it takes at least $\Omega(k\log n)$ rounds till each receiver has all the $k$ messages.    
\end{proof}

\bibliographystyle{acm}
\bibliography{Bdata}

\end{document}